\titleformat*{\section}{\center \bf}
\titleformat*{\subsection}{\raggedright \bf}
\newtheorem{theorem}{Theorem}
\title{\bf{Compute-and-Forward Two-Way Relaying}}
\author{
 \small Seyed Mohammad Azimi-Abarghouyi\\
\small \texttt{sm$\_$azimi@ee.sharif.edu}\\
\and
\small Mohsen Hejazi\\
\small \texttt{mhejazi@ee.sharif.edu}\\
\and
 \small Masoumeh Nasiri-Kenari\\
\small \texttt{mnasiri@sharif.edu}\\
\and
 \small Wireless Research Laboratory (WRL)\\ \small Electrical Engineering Department, Sharif University of Technology, Tehran, Iran}
\date{}
\begin{document}
\maketitle

\begin{abstract}
\textbf{
In this paper, a new two-way relaying scheme based on compute-and-forward (CMF) framework and relay selection strategies is proposed, which provides a higher throughput than the conventional two-way relaying schemes. Two cases of relays with or without feedback transmission capability are considered. An upper bound on the computation rate of each relay is derived, and based on that, a lower bound on the outage probability of the system is presented assuming block Rayleigh fading channels. Numerical results show that while the average sum rate of the system without feedback, named as Max Compute-and-Forward (M-CMF), reaches the derived upper bound only in low SNRs, that of the system with feedback, named as Aligned Compute-and-Forward (A-CMF) reaches the bound in all SNRs. However, both schemes approach the derived lower bound on the outage probability in all SNRs. For the A-CMF, another power assignment based on applying the constraint on the total powers of both users rather than on the power of each separately, is introduced. The result shows that the A-CMF performs better under the new constraint. Moreover, the numerical results show that the outage performance, average sum rate, and symbol error rate of the proposed schemes are significantly better than those of two-step and three-step decode-and-forward (DF) and amplify-and-forward (AF) strategies for the examples considered.}

\par
\textbf{
Index Terms- compute and forward, max compute-and-forward, aligned compute-and-forward, feedback, two-way relaying, relay selection, outage probability, average sum rate, symbol error rate.}
\end{abstract}


\section{INTRODUCTION} 
Two way relaying communications have recently attracted considerable attentions due to their various applications. In this communication scenario, two users attempt to communicate with each other with the help of a relay. To this end, physical layer network coding (PLNC) [1] along with the conventional DF or AF relaying strategy has been commonly considered [2-4]. It has been shown that PLNC can achieve within 1/2 bit of the capacity of a Gaussian TWRC (Two Way Relay Channel) and this is asymptotically optimal at high SNRs [5-6]. In [2-3], based on DF startegy, two-step and three-step two-way relaying schemes are proposed. In the two-step scheme, in the first step, both users simultaneously transmit their messages, and the relay recovers both messages in turn, using a linear receiver structure like successive interference cancellation (SIC) [7]. In the second step, the relay sends a combination of the recovered messages to the users. The problem with the scheme proposed is that, when recovering one of the messages, the other message is considered as noise, which results in a performance loss. As a solution, an optimum ML decoder can be utilized at the relay at the expense of a very high complexity [3]. The three-step DF two-way relaying proposed in [2] requires three time slots that results in a throughput reduction. As an alternative, the relay can exploit AF strategy to simply amplify the received signal from the users, and then forward it to the users. Due to noise amplification in the relay, this scheme shows a poor performance [2].

The novel relaying strategy known as compute-and-forward (CMF), proposed by Nazer and Gastpar [8], is proved to be efficient for multiuser communication scenarios. The CMF strategy can exploit the interference to achieve a higher throughput. CMF strategy is also known as a reliable physical layer network coding [9]. In CMF strategy, all sources transmit simultaneously. Each relay, based on its received signal (a noisy and channel weighted combination of the users' codewords) and its knowledge of the channel coefficients, decodes an equation, which is an integer-linear combination of the users' transmitted messages. The integer coefficients of the equation are presented by a vector called an equation coefficient vector (ECV). The relay has to find the ECV with the highest possible rate. The relay then transmits the decoded equation to the destination. The destination recovers the desired messages by receiving sufficient number of decoded equations from the relays. For codewords, lattice codes are commonly utilized, which can achieve the capacity of additive white Gaussian noise (AWGN) channels [10]. While CMF strategy has been considered in different scenarios in the literature, such as multi-antenna systems [11], cooperative distributed antenna systems [12], multi-access relay channels [13], generalized multi-way relay channels [14], two transmitter multi-relay systems [15], and finally multi-source multi-relay network [16]; however, to our best knowledge the application of CMF in two-way relaying hasn't been considered so far, just from information theory aspect in [17]. In this paper, we propose a new practical framework for two-way relaying based on CMF strategy, in which we use a linear receiver and a general lattice encoding previously proposed by Nazer [8]. 

 We consider this framework for two cases. First, we investigate the relays without the capability of sending any feedback to the users. We call the corresponding proposed scheme as max compute-and-forward (M-CMF). Then we consider the relays that have feedback capability; the related scheme is called as aligned compute-and-forward (A-CMF). For the latter case, the power can be efficiently allocated to the users in a way to increase the computation rate through aligning the scaled channels to the integer coefficients, under a maximum power constraint for each user. The proposed schemes, in contrast to DF and AF based schemes, can handle both the interference and noise, and thus enhance the network throughput considerably. To achieve a higher order of diversity, multiple relays along with a simple relay selection technique are employed. We consider a block Rayleigh fading channels between the users and the relays. The channels have phase variations in addition to the amplitude variations. While the proposed schemes have been considered and work quite well for general complex Gaussian channels with variation in both phase and amplitude, for the sake of simplicity and tractability of the analytical performance evaluation, in this paper, we focus on addressing the amplitude variation of the channels and do not consider the carrier phase offset for the analytical performance analysis. In the other words, we assume that the phase offset between two received users' signals has been compensated at the best relay. This makes the channels realized by the best relay be real-valued Rayleigh channels. This assumption was commonly used in the literature when considering the performance analysis of CMF based strategies, for instance please see [15] and [18]. However for simulation evaluations, we consider general Rayliegh fading channel (Complex Gaussian Coefficient) with both phase and amplitude variations.  We analitically derive, an upper bound on the computation rate of each relay. Then using this bound, we derive a bound on the average sum rate and the outage probability of the proposed scheme. Based on the bound obtained for the outage probability, we derive the system diversity order. Numerical results show that A-CMF reaches the bound in all SNRs, M-CMF is tight on the bound only in low SNRs; in the other words, A-CMF improves the compute rate at high SNR. To have a fair comparison with M-CMF and the other schemes, A-CMF under a power constraint on the total powers of both users rather than on the power of each user is also considered. As expected, the numerical results verifies that A-CMF under the new constraint performs better. For the latter case, A-CMF and M-CMF are in fact compared under the same total system power. We also evaluate the symbol rate of the system, and compare the results with those of the conventional schemes, which indicates the substantial superiority of the proposed schemes.

The remainder of this paper is organized as follows. In Section II, the system model is described. Section III presents the proposed method and the performance analysis is given in Section IV. Numerical results are presented in Section V. Finally, Section VI concludes the paper.

\section{SYSTEM MODEL} 
We consider a two way relay channel with two users and $M$ relays, as shown in Fig. 1. User $j, j=1,2$, exploits a lattice encoder with power constraint $ \alpha _j^2$ to project its message $w_j$ to a length-n complex-valued codeword $x_j$ such that ${\left| {\left| {{x_j}} \right|} \right|^2} \le n{\alpha _j^2}$. ${P_j}$  is considered as the maximum power constraint of the user $j$ ($\alpha _j^2 \le {P_j}$). We assume that each relay has a power constraint equal to $P_r$ that is more than or equal to $\max(P_1,P_2)$. The channel coefficient from user $j, j=1,2$, to relay $i, i = 1,2, \ldots ,M,$ denoted by  $h_{ji}$ and assumed to be equal to the reverse link coefficient $h_{ij}$, follows a real-valued Rayleigh distribution with variance $\sigma _{ji}^2$\footnote{As stated in the introduction, for the simplicity of the presentation and tractability of the analytical analysis, like in [18], here we only focus on the the real channel. For the numerical part, we also consider more general complex Gaussian channel.}. All channel coefficients for different $i$ and $j$ are assumed to be independent. We assume a block fading such that the coefficients remain constant during total transmission time slots required for the message exchanges. There is no direct link between two users. The noise received at the $i$'th relay and at the $j$'th user, denoted by $z_i^r$ and $z_j$ respectively, are i.i.d. according to the zero-mean Gaussian distribution with variance 1.

\section{COMPUTE AND FORWARD TWO WAY RELAYING}
In our proposed CMF based scheme, Multiple Access Broadcast (MABC) protocol for two way transmission is used [19]. In the first time slot (named multiple access phase), two users simultaneously transmit their codewords to the relays. Each relay receives a noisy linear combination of users' codewords, and using the CMF strategy [8], the relay decodes an equation, i.e., an integer linear combination of users' messages (see Subsection A). Then in the second time slot (broadcast phase), the best relay is selected (see Subsection C) to transmit its decoded equation to the users. Finally, by receiving the equation, each user recovers the other user's message (see Subsection B).

\subsection{Computation of the integer equation}
The received signal in each relay $i$ in the multiple access phase can be written as
\begin{eqnarray}
y_i^r = {h_{1i}}{x_1} + {h_{2i}}{x_2} + z_i^r , i = 1,2, \ldots ,M
\end{eqnarray}
\\For each relay $i$, vector ${\mathbf{h}_i}$ and matrix ${\mathbf{H}_i}$ are defined as
\begin{eqnarray}
\mathbf{h}_i \buildrel \Delta \over= \left[ {\begin{array}{*{20}{c}}{{\alpha _1}{h_{1i}}}\\{{\alpha _2}{h_{2i}}}\end{array}} \right]
\end{eqnarray}
and,
\begin{eqnarray}
\mathbf{H}_i \buildrel \Delta \over =  \mathbf{I} - \frac{{{\mathbf{h}_i}{\mathbf{h}_i}^T}}{1 + {{|| \mathbf{h}_i ||}^2}}
\end{eqnarray}

Based on CMF strategy [8], each relay has to decode the Equation Coefficient Vector (ECV), i.e. ${\mathbf{a}_i}= {\left[ {\begin{array}{*{20}{c}}{{a_{1i}}}&{{a_{2i}}}\end{array}} \right]^T} \in {Z^2}$, such a way to maximize its computation rate, i.e., the rate of recovering the equation, 
\begin{eqnarray}
{\mathbf{a}_i} &=& arg\mathop {\max }\limits_{\mathbf{a} \in {Z^2},\mathbf{a} \ne 0} {\log ^ + }\left( {{{\left( {{{\left| {\left| \mathbf{a} \right|} \right|}^2} - \frac{{{{\left| {{\mathbf{h}_i}^T\mathbf{a}} \right|}^2}}}{{1 + {{\left| {\left| {{\mathbf{h}_i}} \right|} \right|}^2}}}} \right)}^{ - 1}}} \right)\nonumber\\&=& arg\mathop {\min }\limits_{\mathbf{a} \in {Z^2},{\mathbf{a} \ne 0}} \left( {{\mathbf{a}^T}{\mathbf{H}_i}\mathbf{a}} \right)
\end{eqnarray}
where ${\rm{lo}}{{\rm{g}}^ + }\left( {\rm{x}} \right) = {\rm{max}}\left( {\log \left( {\rm{x}} \right),0} \right)$.
The solution of (4) can be expressed [20] as scaling the received signal by the following factor
\begin{eqnarray}
{\beta _i} = \frac{{\mathbf{h}_i}^T\mathbf{a}}{{1 + {{\left| {\left| {{{\mathbf{h}}_i}} \right|} \right|}^2}}}
\end{eqnarray}
That is the recovered equation is expressed as
\begin{eqnarray}
{s_i} = Q({\beta _i}y_i^r) = {a_{i1}}{x_1} + {a_{i2}}{x_2}
\end{eqnarray}
where $Q(.)$ shows the lattice quantizer function.
The computation rate $R_i^r$ of the equation $s_i$ is given by [8]
\begin{eqnarray}
R_i^r &=&{\rm{lo}}{{\rm{g}}^ + }\left( {\frac{1}{{1 + {{\left| {\left| {{\beta _i}{\mathbf{h}_i} - {\mathbf{a}_i}} \right|} \right|}^2}}}} \right)\nonumber\\&=&{\rm{lo}}{{\rm{g}}^ + }\left( {{{\left( {{\mathbf{a}_i}^T{\mathbf{H}_i}{\mathbf{a}_i}} \right)}^{ - 1}}} \right)
\end{eqnarray}

To enable each user to recover the other user’s message, the selected equation must not contain zero components. Hence, ${\mathbf{e}_1} = {\left[ {\begin{array}{*{20}{c}}1&0\end{array}} \right]^T}$ and  ${\mathbf{e}_2} = {\left[ {\begin{array}{*{20}{c}}0&1\end{array}} \right]^T}$ , as a solution of (4), are not desirable. Please note that, from [15, lemma 3], the only vectors that have zero elements and can be the solution of (4) are ${\mathbf{e}_1}$ and ${\mathbf{e}_2}$.

If the vector ${\mathbf{a}_i}$ computed from (4) has nonzero components, it will be chosen as the coefficient vector of the selected equation.
Otherwise, we do as follows. From (6), the relay first recovers the $k$'th message
\begin{eqnarray}
{s_i} = {x_k}
\end{eqnarray}
Then, by removing the effect of the codeword $x_k$ from the received vector $y_i^r$, the other user codeword, ${x_{k'}}, k' \ne k$, is also recovered. Finally, The relay constructs the equation ${s_i} = {x_1} + {x_2}$ with nonzero coefficients for the transmission.\\
A necessary condition for the occurrence of ${\mathbf{e}_k},k=1,2$, as the solution of (4) is
\begin{eqnarray}
{\alpha _k^2}h_{ki}^2 = {\rm{max}}\{ {\alpha _1^2}h_{1i}^2,{\alpha _2^2}h_{2i}^2\} 
\end{eqnarray}
From (7), the computation rate of ${\mathbf{e}_k},k=1,2$, can be computed as
\begin{eqnarray}
{R_e^i} = {\rm{log}}\left( {1 + \frac{{{\rm{max}}\{ {\alpha _1^2}h_{1i}^2,{\alpha _2^2}h_{2i}^2\} }}{{1 + {\rm{min}}\{ {\alpha _1^2}h_{1i}^2,{\alpha _2^2}h_{2i}^2\} }}} \right)
\end{eqnarray}
By removing $x_k$ from $y_i^r$, the rate of recovering the other message in the relay $i$ can be easily found as
\begin{eqnarray}
R_{{e^c}}^i= \log \left( {1 + {\rm{min}}\{ {\alpha _1^2}h_{1i}^2,{\alpha _2^2}h_{2i}^2\} } \right)
\end{eqnarray}
Hence, the rate of constructing the equation $s_i=x_1+x_2$ in each relay $i$ is given by
\begin{eqnarray}
R_i^r = {\rm{min}}( {R_e^i},{R_{{e^c}}^i} )
\end{eqnarray}

For users' power allocation, we consider two cases based on whether or not the relay is able to send some information feedback to the users, as follows:

In the case that the relays haven't feedback capability, the user $k$  transmits with the maximum possible power ${P_k}$ ($\alpha _k^2 = {P_k}$). We call this case as Max Compute and Forward (M-CMF).

In the another case when there is feedback capability for the relays, we propose Aligned Compute and Forward (A-CMF) scheme, as follows.
In this scheme, a new power adaptation algorithm is exploited, in which the users adjust their transmitted powers in order to minimize the quantization noise ${\varepsilon _m}$ between the scaled received signal at the best relay $m$ and its computed equation and hereby to increase the computation rate.  

We define a power adaptation vector (PAV) $\bm{\alpha}$ as
\begin{eqnarray}
\bm{\alpha} \buildrel \Delta \over = \left[ \alpha _1^2,\alpha _2^2 \right]
\end{eqnarray}
Based on the computation rate in (7), in each relay $i$, for the case ${\mathbf{a}_i} \ne {\mathbf{e}_1}$ or ${\mathbf{a}_i} \ne {\mathbf{e}_2}$, the quantization noise $\epsilon_i$ can be written as
\begin{eqnarray}
{\varepsilon _i} &=& {\left| {\left| {{\beta _i}{\mathbf{h}_i} - {\mathbf{a}_i}} \right|} \right|^2} \nonumber\\
&=& {\left( {{\beta _i}{\alpha _1}{h_{1i}} - {a_{1i}}} \right)^2} + {\left( {{\beta _i}{\alpha _2}{h_{2i}} - {a_{2i}}} \right)^2}
\end{eqnarray}
where

\hspace*{+30pt}$\alpha _1^2 \le {P_1}$ and $\alpha _2^2 \le {P_2}$.
\vspace{+10pt}

The ECV $\mathbf{a}_i$, PAV $\bm{\alpha}$, and scaling factor $\beta_{i}$ are the unknown parameters and should be optimally selected based on the maximization of the computation rate of the relay. We consider an alternative manner to solve this optimization problem. In the first step, assuming the PAV is known, the ECV and the scaling factor are computed from (4) and (5). In the second step, similarly, by assuming ECV and scaling factor are known based on using the values computed at the previous step, the PAV is calculated as follows:

Using KKT conditions by taking the derivation of (14) and putting the result equal to zero, we have
\begin{eqnarray}
\frac{{\partial {\varepsilon _i}}}{{\partial {\alpha _k}}} + {\mu _{ki}}\frac{{\partial \alpha _k^2}}{{\partial {\alpha _k}}} = 0,\forall k = 1,2
\end{eqnarray}
where ${\mu _{ki}}$ is the KKT coefficient related to the user $k$. (15) leads to:
\begin{eqnarray}
2{\beta _i}{h_{ki}}\left( {{\beta _i}{\alpha _k}{h_{ki}} - {a_{ki}}} \right) + 2{\alpha _k}{\mu _{ki}} = 0 ,\forall k = 1,2
\end{eqnarray}
Hence, we have
\begin{eqnarray}
{\alpha _k} = \frac{{{\beta _i}{h_{ki}}{a_{ki}}}}{{\beta _i^2h_{ki}^2 + {\mu _{ki}}}},\forall k = 1,2
\end{eqnarray}
From KKT conditions, when the answer is in the feasible region or for $\frac{{{a_{ki}}}}{{{\beta _i}{h_{ki}}}} < \sqrt {{P_k}}$, we have
\begin{eqnarray}
{\alpha _k} = \frac{{{a_{ki}}}}{{{\beta _i}{h_{ki}}}} ,\forall k = 1,2
\end{eqnarray}
and when the answer is on the constraint, we have ${\mu _{ki}}\ge 0$ $,\forall k = 1,2$. In this case, we can easily derive that 
\begin{eqnarray}
 {\mu _{ki}}=\frac{{{\beta _i}{h_{ki}}{a_{ki}}}}{{\sqrt {{P_k}} }} - \beta _i^2 h_{ki}^2
\end{eqnarray}
hence, we have
\begin{eqnarray}
{\alpha _k} = \sqrt {{P_k}} ,\forall k = 1,2
\end{eqnarray}
The two described steps are iterated successively until the PAV converges. The above procedures are summarized in Algorithm 1. The parameter $\delta$ used in the algorithm denotes the convergence telorance. 

In some applications, the total system power is more imprtant than the individual user power. In addition, to have a fair comparison of the proposed scheme (A-CMF) with the other ones, they should be compared under the same total system power. As a result, in the following, we consider the A-CMF scheme under the power constraint on the total powers of the users. In other words, we minimize (14) with the new constraint $\alpha _1^2+\alpha _2^2 \le {P_1}+{P_2}$. It is clear that in this case, the feasible region is larger and hence the performance should be better. Similar to the previous case (separate power constraint), the KKT conditions results to
 \begin{eqnarray}
\frac{{\partial {\varepsilon _i}}}{{\partial {\alpha _k}}} + {\mu _{i}}\frac{{\partial \alpha _k^2}}{{\partial {\alpha _k}}} = 0,\forall k = 1,2
\end{eqnarray}
where ${\mu _{i}}$ is the KKT coefficient.
This leads to
\begin{eqnarray}
{\alpha _k} = \frac{{{\beta _i}{h_{ki}}{a_{ki}}}}{{\beta _i^2h_{ki}^2 + {\mu _{i}}}},\forall k = 1,2
\end{eqnarray}
From KKT conditions, when the constraint holds with equality, ${\mu _{i}}$, obtained by solving the following equation, must be nonnegative.
\begin{eqnarray}
(\frac{{{\beta _i}{h_{1i}}{a_{1i}}}}{{\beta _i^2h_{1i}^2 + {\mu _{i}}}})^2+(\frac{{{\beta _i}{h_{2i}}{a_{2i}}}}{{\beta _i^2h_{2i}^2 + {\mu _{i}}}})^2={P_1}+{P_2}
\end{eqnarray}
This equation has not a straightforward answer and can be solved by bisection method [21]. 
\\On the other hand, when the answer is inside the feasible region, we should have
\begin{eqnarray}
(\frac{{{\beta _i}{h_{ki}}{a_{ki}}}}{{\beta _i^2h_{ki}^2 }})^2+(\frac{{{\beta _i}{h_{ki}}{a_{ki}}}}{{\beta _i^2h_{ki}^2 }})^2 < {P_1}+{P_2}
\end{eqnarray}
and the PAV is given by
\begin{eqnarray}
{\alpha _k} = \frac{{{a_{ki}}}}{{{\beta _i}{h_{ki}}}},\forall k = 1,2
\end{eqnarray}

When ${\mathbf{a}_i}$ is either ${\mathbf{e}_1}$ or ${\mathbf{e}_2}$, according to the computation rate in (12), the optimum value of power for each user is the maximum possible value, i.e. we have $\alpha_1^2=P_1$ and $\alpha_2^2=P_2$.

This algorithm is implemented in each relay. Then the relay with the highest computation rate (given in (7) and (12)) is selected as the best relay, as will be described in the following in section III.C. The PAV of the best relay is sent to the users through a feedback channel in order the users to adjust their transmission powers.

\begin{center}
\line(1,0){470}
\end{center}
\vspace{-15pt}
\textbf{Algorithm 1: calculating PAV for relay i}	
\vspace{-15pt}
\begin{center}
\line(1,0){470}
\end{center}
\vspace{-5pt}

Initialize $\alpha _k^{\left( 0 \right)}$ $,\forall k = 1,2$ and $\delta $

\uline{Iterate}

\hspace*{+8pt}1.ECV search: update $\mathbf{a}_i$ and ${\beta _i}$ from (4) and (5) for fixed $\alpha _k^{\left( j \right)} $ $,\forall k=1,2$

\hspace*{+8pt}2.Update $\alpha _k^{\left( {j + 1} \right)}$ $,\forall k = 1,2$ for fixed $\mathbf{a}_i$ and ${\beta _i}$ as follows

\hspace*{+30pt}for $\frac{{{a_{ki}}}}{{{\beta _i}{h_{ki}}}} < \sqrt {{P_k}}$ $\Rightarrow$ ${\alpha _k} = \frac{{{a_{ki}}}}{{{\beta _i}{h_{ki}}}} , \forall k = 1,2$ 

\hspace*{+30pt}and for $\frac{{{\beta _i}{h_{ki}}{a_{ki}}}}{{\sqrt {{P_k}} }} - \beta _i^2 h_{ki}^2 \ge 0$ $\Rightarrow$ ${\alpha _k} = \sqrt {{P_k}} ,\forall k = 1,2$

\uline{Until} $|\alpha _k^{\left( {j + 1} \right)} - \alpha _k^{\left( j \right)}{|^2} \le \delta$  $,\forall k$ 
\vspace{-5pt}
\begin{center}
\line(1,0){470}
\end{center}

\subsection{Recovering the message by each user}
The selected relay $m$ sends its recovered equation $s_m$, with the rate $R_m^r$, to both users. Each user by receiving this equation and having its own message can recover the other user's message. The received signal by user $i$ is written as
\begin{eqnarray}
{y_i} = {h_{im}}{s_m} + {z_i} ,i=1,2
\end{eqnarray}

Since the channel from the relay $m$ to the user $i ,i=1,2$, is a simple point to point channel, the achievable rate for the transmission of the equation $s_m$ is
\begin{eqnarray}
{R_{im}} = {\rm{log}}\left( {1 + {P_r}h_{im}^2} \right)
\end{eqnarray}
\\The rate in (27) is achievable using CMF strategy [8]. Like as (2) and (3), in one user computation case, the scaling factor ${\gamma _i}$ and the recovered equation for user $i$ are given by
\begin{eqnarray}
{\gamma _i} = \frac{{{P_r}h_{im}}}{{1 + {P_r}}h_{im}^2}
\end{eqnarray}
\begin{eqnarray}
{s_m} = Q\left( {{\gamma _i}{y_i}} \right)
\end{eqnarray}
The rate of recovering both messages in both users is easily given by
\begin{eqnarray}
{R_{{\rm{scheme}}}} = {\rm{min}}\left\{ {R_m^r,{R_{1m}},{R_{2m}}} \right\}
\end{eqnarray}

\subsection{Best Relay Selection}

From (30), to maximize the rate of recovering both messages in both users, denoted by $R_{scheme}$, the relay $m$ must be selected as
\begin{eqnarray}
m = arg\mathop {\max }\limits_i {\rm{min}}\left\{ {R_i^r,{R_{1i}},{R_{2i}}} \right\}
\end{eqnarray}

\begin{theorem}
The computation rate of the equation $s_i$ in each relay $i$, i.e. $R_i^r$, is upper bounded as
\begin{eqnarray}
R_i^r \le \log \left( {1 + {\rm{min}}\{ {\alpha _1^2}h_{1i}^2,{\alpha _2^2}h_{2i}^2\} } \right)
\end{eqnarray}
\end{theorem}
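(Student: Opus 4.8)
The plan is to split the argument according to whether the minimizing coefficient vector $\mathbf{a}_i$ of (4) is one of the unit vectors, since $R_i^r$ is defined through (7) in the first situation and through (13) in the second. A crude bound is not enough: since $\mathbf{H}_i$ has smallest eigenvalue $(1+\|\mathbf{h}_i\|^2)^{-1}$, the estimate $\mathbf{a}^T\mathbf{H}_i\mathbf{a}\ge \|\mathbf{a}\|^2(1+\|\mathbf{h}_i\|^2)^{-1}$ only gives $R_i^r\le\log(1+\alpha_1^2h_{1i}^2+\alpha_2^2h_{2i}^2)$, which is weaker than the claimed bound with the $\min$. So the cross term in the quadratic form must be exploited.

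\textbf{Preliminary identity.} Writing $g_j:=\alpha_j h_{ji}$, so that $\mathbf{h}_i=[g_1,g_2]^T$ and $\|\mathbf{h}_i\|^2=\alpha_1^2h_{1i}^2+\alpha_2^2h_{2i}^2$, I would first expand the quadratic form in (4) using the definition (3):
\[
\mathbf{a}^T\mathbf{H}_i\mathbf{a}=\|\mathbf{a}\|^2-\frac{(\mathbf{h}_i^T\mathbf{a})^2}{1+\|\mathbf{h}_i\|^2}=\frac{a_1^2+a_2^2+(g_1a_2-g_2a_1)^2}{1+g_1^2+g_2^2},
\]
the last equality being the Lagrange identity $\|\mathbf{a}\|^2\|\mathbf{h}_i\|^2=(\mathbf{h}_i^T\mathbf{a})^2+(g_1a_2-g_2a_1)^2$ applied to the numerator. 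Without loss of generality assume $g_1^2\le g_2^2$, so that $\min\{\alpha_1^2h_{1i}^2,\alpha_2^2h_{2i}^2\}=g_1^2$.

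\textbf{Key inequality (the crux).} I would then prove that for \emph{every} $\mathbf{a}=[a_1,a_2]^T\in Z^2$ with $a_1\ne 0$,
\[
\mathbf{a}^T\mathbf{H}_i\mathbf{a}\ge\frac{1}{1+g_1^2}.
\]
To do so, regard the numerator above as a quadratic in the real variable $a_2$ and complete the square: it equals $(1+g_1^2)\bigl(a_2-\frac{g_1g_2a_1}{1+g_1^2}\bigr)^2+a_1^2\,\frac{1+g_1^2+g_2^2}{1+g_1^2}$, hence is at least $a_1^2\,\frac{1+g_1^2+g_2^2}{1+g_1^2}\ge\frac{1+g_1^2+g_2^2}{1+g_1^2}$ because $a_1$ is a nonzero integer, so $a_1^2\ge1$. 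Dividing by $1+g_1^2+g_2^2$ gives the claim. The point worth stressing is that only the coefficient on the \emph{weaker} link (here $a_1$) needs to be a nonzero integer; this is exactly why the bound ends up governed by the minimum of the two link strengths rather than their sum, and I expect this completing-the-square step to be the one to get right.

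\textbf{Putting it together.} If $\mathbf{a}_i\notin\{\mathbf{e}_1,\mathbf{e}_2\}$, then by [15, Lemma 3] (invoked in the text to note that $\mathbf{e}_1,\mathbf{e}_2$ are the only minimizers of (4) with a zero entry) $\mathbf{a}_i$ has both entries nonzero, in particular $a_{1i}\ne0$; the key inequality then gives $\mathbf{a}_i^T\mathbf{H}_i\mathbf{a}_i\ge\bigl(1+\min\{\alpha_1^2h_{1i}^2,\alpha_2^2h_{2i}^2\}\bigr)^{-1}$, and since $\log^{+}$ is nondecreasing and $1+\min\{\alpha_1^2h_{1i}^2,\alpha_2^2h_{2i}^2\}\ge1$, formula (7) yields $R_i^r=\log^{+}\bigl((\mathbf{a}_i^T\mathbf{H}_i\mathbf{a}_i)^{-1}\bigr)\le\log\bigl(1+\min\{\alpha_1^2h_{1i}^2,\alpha_2^2h_{2i}^2\}\bigr)$. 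If instead $\mathbf{a}_i\in\{\mathbf{e}_1,\mathbf{e}_2\}$, the relay constructs $s_i=x_1+x_2$ with rate $R_i^r=\min(R_e^i,R_{e^c}^i)$ by (13), so $R_i^r\le R_{e^c}^i=\log(1+\min\{\alpha_1^2h_{1i}^2,\alpha_2^2h_{2i}^2\})$ directly from (12); this case is immediate. The two cases together give the theorem, and the $\log^{+}$ edge cases (where $R_i^r=0$) are trivial since the right-hand side is nonnegative.
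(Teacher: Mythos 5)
Your proposal is correct and follows essentially the same route as the paper's proof: the same Lagrange-identity rewriting of $\mathbf{a}^T\mathbf{H}_i\mathbf{a}$ in terms of $a_1^2+a_2^2+(g_1a_2-g_2a_1)^2$, the same relaxation of $a_2$ to a real variable (your completing-the-square is the closed-form of the paper's derivative computation), the same use of $a_1^2\ge 1$ on the weaker link, and the same separate treatment of $\mathbf{a}_i\in\{\mathbf{e}_1,\mathbf{e}_2\}$ via $R_i^r=\min(R_e^i,R_{e^c}^i)\le R_{e^c}^i$. The only (cosmetic) difference is that you state the bound as a pointwise inequality valid for every integer vector with $a_1\ne 0$, which sidesteps the paper's reduction to $a_1,a_2\ge 1$.
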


\begin{proof}
(Note: For the sake of simplicity, the subscript index $i$ of the rates is removed.)

First we consider the case that the ECV ${\mathbf{a}} \ne {\mathbf{e}_1} $ or ${\mathbf{e}_2}$ :\\
From (7), we need to show that
\begin{eqnarray}
\mathop {{\rm{max}}}\limits_{{\mathbf{a}} \ne {\mathbf{e}_1},{\mathbf{e}_2}} {\log ^ + }\left( {{{\left( {{{\left| {\left| {\mathbf{a}} \right|} \right|}^2} - \frac{{{{\left| {{{\mathbf{h}}^T}{\mathbf{a}}} \right|}^2}}}{{1 + {{\left| {\left| {\mathbf{h}} \right|} \right|}^2}}}} \right)}^{ - 1}}} \right)
\le \log \left( {1 + {\rm{min}}\{ {\alpha _1^2}h_{1}^2,{\alpha _2^2}h_{2}^2\} } \right)
\end{eqnarray}
It is sufficient to show that
\begin{eqnarray}
\frac{{1 + {{\left| {\left| {\mathbf{h}} \right|} \right|}^2}}}{{\mathop {{\rm{min}}}\limits_{{\mathbf{a}} \ne {\mathbf{e}_1},{\mathbf{e}_2}} {{\left| {\left| {\mathbf{a}} \right|} \right|}^2} + {{\left| {\left| {\mathbf{h}} \right|} \right|}^2}{{\left| {\left| {\mathbf{a}} \right|} \right|}^2} - {{\left| {{{\mathbf{h}}^T}{\mathbf{a}}} \right|}^2}}}
\le 1 + {\rm{min}}\left( {{\alpha _1^2}h_1^2,{\alpha _2^2}h_2^2} \right)
\end{eqnarray}
Which can be written as
\begin{eqnarray}
1 + {\left| {\left| {\mathbf{h}} \right|} \right|^2} \le \left( {1 + {\rm{min}}\left( {{\alpha _1^2}h_1^2,{\alpha _2^2}h_2^2} \right)} \right)
\times \mathop {{\rm{min}}}\limits_{{\mathbf{a}} \ne {\mathbf{e}_1},{\mathbf{e}_2}} \{ {\left| {\left| {\mathbf{a}} \right|} \right|^2} + {\left| {\left| {\mathbf{h}} \right|} \right|^2}{\left| {\left| {\mathbf{a}} \right|} \right|^2} - {\left| {{{\mathbf{h}}^T}{\mathbf{a}}} \right|^2}\} 
\end{eqnarray}
where $\mathbf{h}= \left[ {\begin{array}{*{20}{c}}{{\alpha _1}{h_1}}\\{{\alpha _2} {h_2}}\end{array}} \right]$ and ${\mathbf{a}} = \left[ {\begin{array}{*{20}{c}}{{a_1}}\\{{a_2}}\end{array}} \right]$.
We define the variable $J$ as
$$J \buildrel \Delta \over = {\left| {\left| {\mathbf{a}} \right|} \right|^2} + {\left| {\left| {\mathbf{h}} \right|} \right|^2}{\left| {\left| {\mathbf{a}} \right|} \right|^2} - {\left| {{{\mathbf{h}}^T}{\mathbf{a}}} \right|^2}$$
which with some straightforward simplifications and by defining ${g_1} \buildrel \Delta \over = {\alpha _1}{h_1}$ and ${g_2} \buildrel \Delta \over = {\alpha _2}{h_2}$, results in
\begin{eqnarray}
J = a_1^2 + a_2^2 + \left( {{g_1}{a_2} - {g_2}{a_1}} \right)^2
\end{eqnarray}
Without loss of generality, we assume that ${g_1} \le {g_2}$.
Then
\begin{eqnarray}
\mathop {{\rm{min}}}\limits_{{\mathbf{a}} \in {\mathbb{Z}^2},{\mathbf{a}} \ne {\mathbf{e}_1},{\mathbf{e}_2}} J=\mathop {{\rm{min}}}\limits_{a_1,a_2 \in {\mathbb{Z}},a_1,a_2\ge{1}} J\ge{\mathop {{\rm{min}}}\limits_{a_1\in {\mathbb{Z}},a _2\in{\mathbb{R}},a_1\ge{1}} J}
\end{eqnarray}
For the right side of the inequality, we can minimize $J$ with respect to $a_2$ as follows
\begin{eqnarray}
\frac{{\partial J}}{{\partial {a_2}}} = 2{a_2} + 2{g_1}\left( {{g_1}{a_2} - {g_2}{a_1}} \right) = 0
\end{eqnarray}
which leads to
\begin{eqnarray}
{a_2} = \frac{{{g_1}{g_2}}}{{1 + g_1^2}}{a_1}
\end{eqnarray}
By substituting (39) in (36) and some straightforward simplifications, we obtain
\begin{eqnarray}
J = a_1^2(1 + g_2^2 - (g_1^2g_2^2)/(1 + g_1^2))
\end{eqnarray}
Thus, from (37) we get
\begin{eqnarray}
\mathop {{\rm{min}}}\limits_{{\mathbf{a}} \ne {\mathbf{e}_1},{\mathbf{e}_2}} J \ge \mathop {{\rm{min}}}\limits_{{a_1} \ge 1,{a_1} \in \mathbb{Z}} a_1^2\left( {1 + g_2^2 - \frac{{g_1^2g_2^2}}{{1 + g_1^2}}} \right)
\end{eqnarray}
Since we have
\begin{eqnarray}
\mathop {{\rm{min}}}\limits_{{a_1} \ge 1,{a_1} \in \mathbb{Z}} a_1^2\left( {1 + g_2^2 - \frac{{g_1^2g_2^2}}{{1 + g_1^2}}} \right) = 1 + g_2^2 - \frac{{g_1^2g_2^2}}{{1 + g_1^2}}
\end{eqnarray}
The right-hand side of (35) can be written as
\begin{eqnarray}
\left( {1 + {\rm{min}}(g_1^2,g_2^2} \right))\mathop {{\rm{min}}}\limits_{{\mathbf{a}} \ne {\mathbf{e}_1},{\mathbf{e}_2}} J \ge  
\left( {1 + g_1^2} \right)\left ( {1 + g_2^2 - \frac{{g_1^2g_2^2}}{{1 + g_1^2}}} \right) =1 + g_1^2 + g_2^2 
\end{eqnarray}
This proves (35), and then (33).
 
Now, we consider the case that $\mathbf{a}={\mathbf{e}_1}$ or ${\mathbf{e}_2}$ . It is clear that
\begin{eqnarray}
{R^r} = \min \left( {{R_e},R_{{e^c}}} \right) \le R_{{e^c}} 
\end{eqnarray}
Hence, the theorem is proved.
\end{proof}

The bound derived is tight, specially at Low SNRs. Please note that at low SNRs, the ECV as a solution of (4) is usually either ${\mathbf{e}_1}$ or ${\mathbf{e}_2}$ [15]. In this case, from (36) $J$ can be written as
\begin{eqnarray}
J_0 = 1 + g_1^2
\end{eqnarray}
which certainly is lower than $\mathop {{\rm{min}}}\limits_{{\mathbf{a}} \in {\mathbb{Z}^2},{\mathbf{a}} \ne {\mathbf{e}_1},{\mathbf{e}_2}} J$, (please note that we have assumed ${\mathbf{e}_k}$ as the solution). For $J_0$ lower than the bound, i.e. ${\mathop {{\rm{min}}}\limits_{a_1\in {\mathbb{Z}},a _2\in{\mathbb{R}},a_1\ge{1}} J}$, we have
 \begin{eqnarray}
1 + g_1^2\le 1 + g_2^2 - \frac{{g_1^2g_2^2}}{{1 + g_1^2}}=1 + \frac{{g_2^2}}{{1 + g_1^2}}
\end{eqnarray}
that leads to $R_{{e^c}} \le R_e$ and ${R^r} = R_{e^c}$.
Hence, at low SNRs, the rate is very close to the bound given in (32) with a high probability.

According to this theorem, we have
\begin{eqnarray}
R_i^r \le \log \left( {1 + {\rm{min}}\{ {P_1}h_{1i}^2,{P_2}h_{2i}^2\} } \right)
\end{eqnarray}
Using (47) and with the assumation of ${P_r} \ge {\rm{max}}\left( {{P_1},{P_2}} \right)$ we can easily rewrite (31) as
\begin{eqnarray}
m = arg\mathop {\max }\limits_i R_i^r
\end{eqnarray}
Now, the best relay, after the multiple access phase, is selected based on (48), using the approach similar to [22]. That is the relay $i, i=1,2,...,M$, sets a timer with the value $T_i$ proportional to the inverse of its corresponding rate, i.e. $R_i^r$. The first relay that its timer reaches zero (which has the highest rate) broadcasts a flag, to inform other relays, and is selected as the best relay for the broadcast phase.

\section{PERFORMANCE ANALYSIS}
From (30), the outage probability of the proposed scheme can be computed as
\begin{eqnarray}
P_{out}^{CMF}(R_{t}) &=& pr\left( {{R_{{\rm{scheme}}}} < {R_t}} \right) \nonumber\\
 &=&pr\left( {\min \left\{ {R_m^r,{R_{1m}},{R_{2m}}} \right\} < {R_t}} \right)
\end{eqnarray}
where $R_t$ denotes the target rate. According to the Theorem 1 and with the assumation of ${P_r} \ge {\rm{max}}\left( {{P_1},{P_2}} \right)$, we have
\begin{eqnarray}
P_{out}^{CMF}(R_{t}) = pr(R_m^r < {R_t})
\end{eqnarray}
Moreover, from Theorem 1, a lower bound for the outage probability is derived as follows
\begin{eqnarray}
P_{out}^{CMF,bound}(R_{t}) &=&pr({\rm{ma}}{{\rm{x}}_{i = 1, \ldots ,M}}\log \left( {1 + {\rm{min}}\{ {P_1}h_{1i}^2,{P_2}h_{2i}^2\} } \right) < {R_t}) \nonumber\\
&=&pr({\rm{ma}}{{\rm{x}}_{i = 1, \ldots ,M}}\min \left( {{P_1}h_{1i}^2,{P_2}h_{2i}^2} \right) < {2^{{R_t}}} - 1)
\end{eqnarray}
With the defination of ${\gamma _i} \buildrel \Delta \over = {\rm{min}}\left\{ {{P_1}h_{1i}^2,{P_2}h_{2i}^2} \right\}$ and since ${\gamma _i}$s are independent exponential random variables with the following CDF:
\begin{eqnarray}
{F_{{\gamma _i}}}\left( \gamma  \right) = 1 - {e^{ - \left( {\frac{1}{{{P_1}\sigma _{1i}^2}} + \frac{1}{{{P_2}\sigma _{2i}^2}}} \right)\gamma }}
\end{eqnarray}
we can find the CDF of ${\gamma _{max}} \buildrel \Delta \over = \mathop {\max }\limits_i {\gamma _i}$ as
\begin{eqnarray}
pr\left( {{\gamma _{max}} < \gamma } \right) = \mathop \prod \limits_{i = 1}^M pr({\gamma _i} < \gamma )= \mathop \prod \limits_{i = 1}^M {F_{{\gamma _i}}}\left( \gamma  \right)
\end{eqnarray}
Hence, the outage probability lower bound can be easily computed as
\begin{eqnarray}
P_{out}^{CMF,bound}(R_{t})=\mathop \prod \limits_{i = 1}^M 1 - {e^{ - \left( {\frac{1}{{{P_1}\sigma _{1i}^2}} + \frac{1}{{{P_2}\sigma _{2i}^2}}} \right)\left( {{2^{{R_t}}} - 1} \right)}}
\end{eqnarray}
From Taylor series expansion, in high SNRs, when $P_1=P_2=P$, we can approximate (54) as
\begin{eqnarray}
P_{out}^{CMF,bound}(R_{t}) \le \mathop \prod \limits_{i = 1}^M \left( {\frac{{{2^{{R_t}}} - 1}}{P}} \right)\left( {\frac{1}{{\sigma _{1i}^2}} + \frac{1}{{\sigma _{2i}^2}}} \right)=\frac{{{{({2^{{R_t}}} - 1)}^M}}}{{{P^M}}}\mathop \prod \limits_{i = 1}^M \left( {\frac{1}{{\sigma _{1i}^2}} + \frac{1}{{\sigma _{2i}^2}}} \right)
\end{eqnarray}
Hence, the acheivable diversity order from the outage bound with the definition $G =  - \mathop {\lim }\limits_{P \to \infty } \frac{{{\rm{log}}({P_{out}})}}{{{\rm{log}}\left( P \right)}}$ [23] is equal to M, i.e. the number of relays.

According to the Theorem1 and with the assumation of ${P_r} \ge {\rm{max}}\left( {{P_1},{P_2}} \right)$, an upperbound on the sum rate conditioned  on each channel realizations can be derived as
\begin{eqnarray}
R_{sum}^{CMF,bound}(h_{1},h_{2}) = 2 \log \left( {1 + {\rm{ma}}{{\rm{x}}_{i = 1, \ldots ,M}}\min \left( {{P_1}h_{1i}^2,{P_2}h_{2i}^2} \right) } \right) = 2{\rm{log}}\left( {1 + {\gamma _{max}}} \right)
\end{eqnarray}
The unconditional sum rate can be computed by taking the expectation of (56) as
\begin{eqnarray}
R_{sum}^{CMF,bound}=\mathop \smallint \limits_0^\infty  2\log ( {1 + \gamma } ){f_{{\gamma _{max}}}}( \gamma )d{\gamma} 
\end{eqnarray}
where $f_{\gamma _{max}}$ is the PDF of $\gamma _{max}$, which can be easily obtained from its CDF given in (52)-(53). With some straightforward simplifications, leads to
\begin{eqnarray}
R_{sum}^{CMF,bound}&=&\mathop \sum \limits_{i = 1}^M \left( {\frac{1}{{{P_1}\sigma _{1i}^2}} + \frac{1}{{{P_2}\sigma _{2i}^2}}} \right)\{ {I_1}\left( {\frac{1}{{{P_1}\sigma _{1i}^2}} + \frac{1}{{{P_2}\sigma _{2i}^2}}} \right)\nonumber\\\ &-& \mathop \sum \limits_{k = 1,k \ne i}^M {I_1}\left( {\frac{1}{{{P_1}\sigma _{1i}^2}} + \frac{1}{{{P_2}\sigma _{2i}^2}} + \frac{1}{{{P_1}\sigma _{1k}^2}} + \frac{1}{{{P_2}\sigma _{2k}^2}}} \right)\nonumber\\&+&\mathop \sum \limits_{k = 1,k \ne i}^{M-1} \mathop \sum \limits_{l = 1,l \ne k,i}^M {I_1}(\frac{1}{{{P_1}\sigma _{1i}^2}} + \frac{1}{{{P_2}\sigma _{2i}^2}} + \frac{1}{{{P_1}\sigma _{1k}^2}} + \frac{1}{{{P_2}\sigma _{2k}^2}} + \frac{1}{{{P_1}\sigma _{1l}^2}} + \frac{1}{{{P_2}\sigma _{2l}^2}})\nonumber\\\ &-&  \ldots  + {\left( { - 1} \right)^{M - 1}}{I_1}\left( {\mathop \sum \limits_{k = 1}^M \frac{1}{{{P_1}\sigma _{1k}^2}} + \frac{1}{{{P_2}\sigma _{2k}^2}}} \right)\} 
\end{eqnarray}
where ${I_n}\left( \mu  \right) = \mathop \smallint \limits_0^\infty  {t^{n - 1}}\ln \left( {1 + t} \right){e^{ - \mu t}}dt= \left( {n - 1} \right)!{e^\mu }\mathop \sum \limits_{l = 1}^n \frac{{\Gamma \left( {l - n,\mu } \right)}}{{{\mu ^l}}}$ and $\Gamma \left( {.,.} \right)$ is the upper incomplete gamma function defined in [24].
\section{NUMERICAL RESULTS}
For numerical evaluation, target rate $R_t=1$ is considered. The Rayleigh channel parameters equal to  $\sigma _{ji}^2 = 1,j=1,2,i=1,...,M$, are assumed. The parameter $\delta$ in algorithm 1 is setteled as ${10^{ - 3}}$.

In Fig. 2, the outage probability of the proposed schemes along with the derived lower bound given in (54), versus SNR, is plotted for $M=1,2,3$ relays and for equal maximum transmit powers for the users and the relay. As observed, for both M-CMF and A-CMF schemes, the derived lower bound is quite tight especially at high SNRs. Moreover, as expected, by the increase of the number of relays, the outage performance as well as the diversity order improve significantly. It is observed that the proposed schemes provides a diversity order of $M$, i.e., the number of relays employed.

In Fig. 3, the average sum rates of the proposed schemes along with the derived upper bound in (58) are plotted for $M = 1,2$ and for equal maximum powers for the users. As observed, the M-CMF reaches the bound only in low SNRs, while the A-CMF approaches the bound in all SNR values. In other words, A-CMF outperforms the M-CMF in high SNR at the cost of using feedback transmission.

Fig. 4 compares the symbol error rate (SER) of the proposed schemes with the ones introduced in [2], including AF and two-step DF, for $M=1$ with BPSK modulation and for equal maximum transmit powers for the users. Our proposed scheme indiciates significantly better performance about 6dB in SER equal to 0.02. Please note that the SER has been evaluated by simulation, as it is not easy at all to analytically derive the SER when using the CMF based strategy, due to an integer optimization problem being solved numerically within this strategy.

Fig. 5 compares the outage probability of the proposed schemes with the conventional strategies and also three-step DF [2], for $M=2$ and for equal maximum transmit powers for the users. The same relay selection strategy is used for all schemes. Although all of the methods provide the same order of diversity, our proposed schemes demonstrates a better performance about 2dB in high SNR values. 

Fig. 6 compares the average sum rate of the proposed schemes with the conventional strategies, for $M = 2$ and for equal transmit powers. As it is observed, our proposed schemes perform significantly better than the conventional strategies in all SNRs. For example, in sum rate 4, A-CMF has 4dB and M-CMF has 2dB improvement in comparison with the best conventional relaying scheme.

Fig. 7 compares the average sum rate of the A-CMF scheme under two different power constraints, one in each user power and the other on the total power, for $M=1,2$ relays. For the first case, maximum transmission power of each user is considered to be equal to $P$, i.e. $P_{1}=P_{2}=P$, while for the second case, the maximum total transmission powers of both users is considered to be equal to $2P$. As it is observed from this figure, in the latter case, the system has a better performance. The reason is that in the second case, the feasible region of the optimization in (14) is larger than the one of the first case, that results in a higher rate. When comparing with the other scheme, the latter constraint is more reasonable, as different schemes should be compared under the same total transmission powers. Since our results above indicate that the A-CMF under the maximum power constraint on each user transmission performs better than the other schemes, specially in term of the average sum rate, we didn't bring the comparison results with the other schemes, when the A-CMF is designed under the constraint on the total transmission power.

In Fig. 8, we evalute the perfromance of proposed schemes along with the conventional strategies for $M=2$ when all link are modeled as complex gaussian channels with variance one. The figure shows that the outage probability of the peroposed schemes, specially A-CMF, are better than those of the conventional strategies. 

In Fig. 9, the performance of proposed scheme has been evaluated when the channels' variances are not identical. This Fig. shows the outage probability of proposed schemes and two-step DF versus SNR for different values of ${\rm{delta}} \buildrel \Delta \over = |\sigma _1^2 - \sigma _2^2|$, where $\sigma _{ji}^2 = \sigma _j^2,j=1,2,i=1,...,M$. delta in fact indicates the difference between the two users' channel variances. In this Fig., we have $M=2$. For a fair camparison, the sum of the two channels' variances is set equal to two, i.e. $\sigma _1^2 + \sigma _2^2 = 2$. As can be observed, the lower delta makes better perfromance, however diversity order does not change with delta. From this Fig., the perfromance of the proposed schemes are better than two-step DF, which shows the best performance among the conventional schemes (please see Fig. 5). As expected, the amount of the improvement decreases by the increase of the delta. For example, in outage $10^{-2}$, while at delta equal to 0.5, the proposed schemes have 1.8dB better perfromance than the two-step DF, the improvement is 1.4dB at delta equal to 1.

\section{CONCLUSION}
In this paper, based on CMF strategy, a novel two-way relaying scheme, for two cases of relays with and without capability of feedback transmission, is proposed that improves the network throughput significantly. Furthermore, a relay selection scheme is exploited to achieve a higher order of diversity through employing multiple relays. By theoretical analysis, an upper bound on the computation rate of each relay is derived and based on that, a tight lower bound on the outage probability and an upper bound on the average sum rate of the system are presented. Our numerical results showed that the proposed scheme, in both cases of with and without using feedback, performs significantly better than the AF and DF strategies in terms of the outage probability, average sum rate, and the symbol error rate, and also provides a diversity order equal to the number of relays employed.

\newpage

\begin{figure}[tb!]
\centering

\includegraphics[width =4.5in]{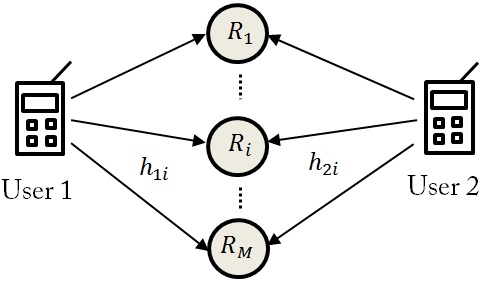}

\caption{Two way relay channel}

\end{figure}

\newpage
~
\newpage

\begin{figure}[h]
\centering
\center
\includegraphics[width =\columnwidth]{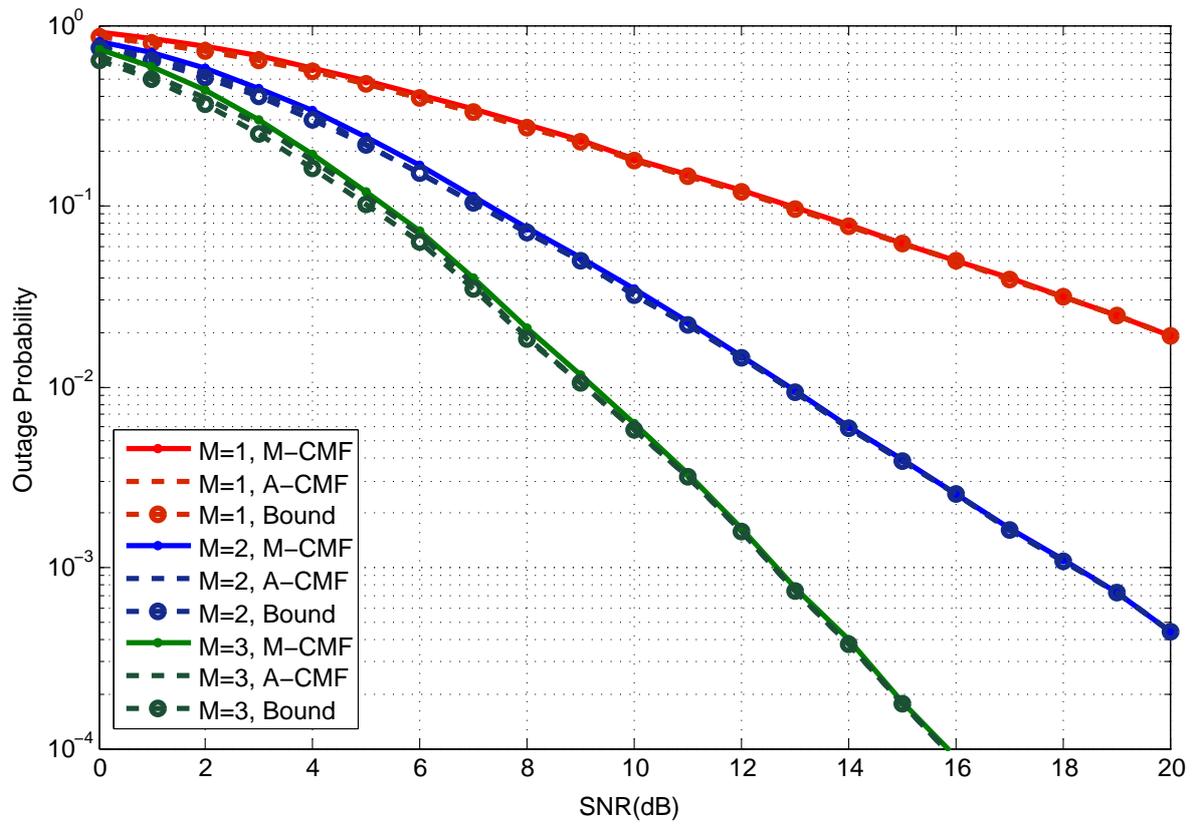}
\caption{Outage probability of the proposed schemes along with the derived lower bound versus SNR (M=1,2,3).}

\end{figure}

\newpage

\begin{figure}[h]
\centering
\center
\includegraphics[width =\columnwidth]{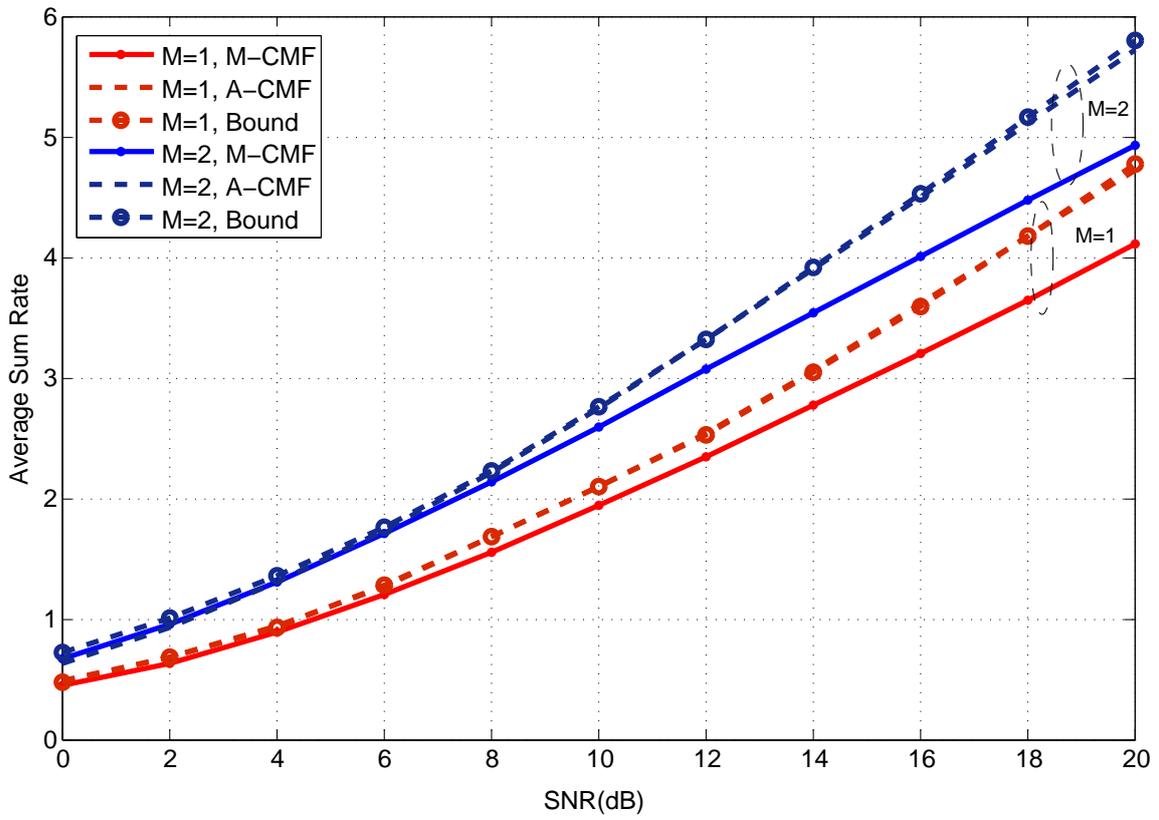}
\caption{Average sum rate of the proposed schemes along with the proposed upper bound versus SNR (M=1,2).}

\end{figure}

\newpage

\begin{figure}[h]
\centering
\center
\includegraphics[width =\columnwidth]{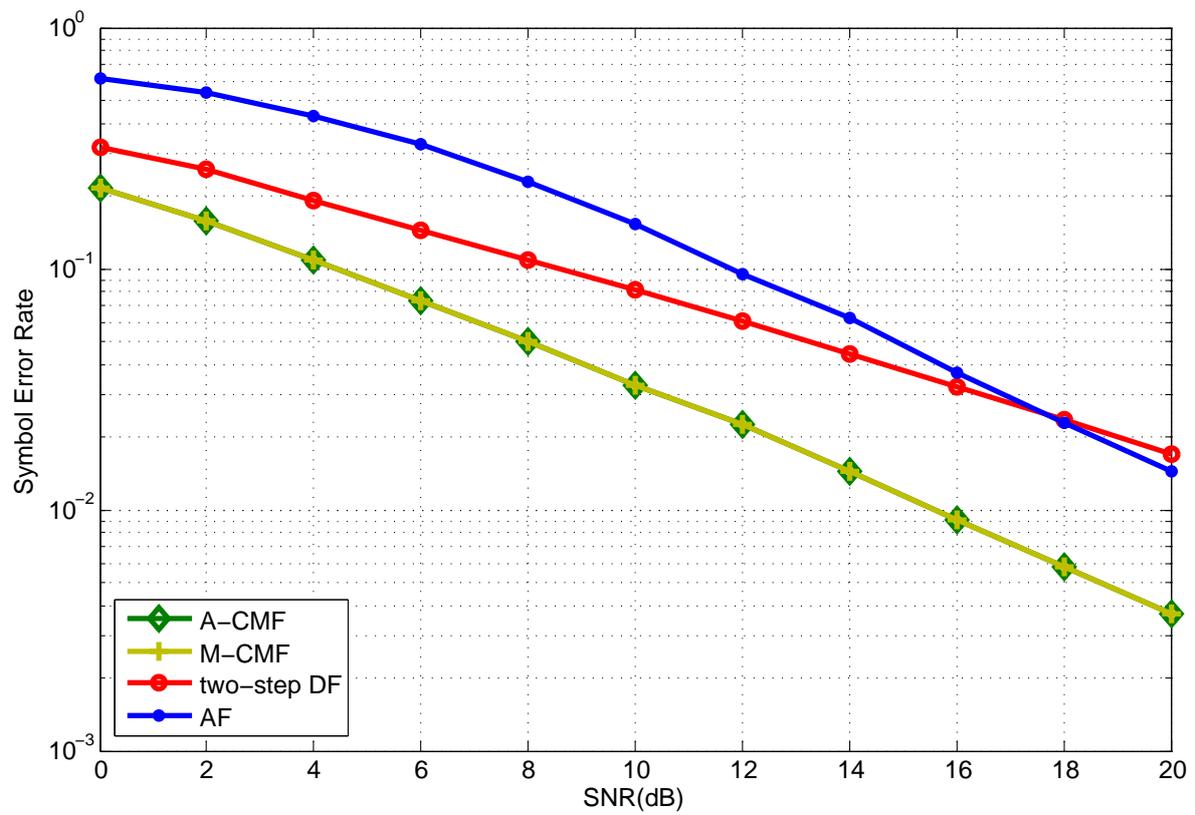}
\caption{Symbol error rate of the proposed schemes in comparison with conventional strategies versus SNR (M=1).}

\end{figure}

\newpage

\begin{figure}[h]
\centering
\center
\includegraphics[width =\columnwidth]{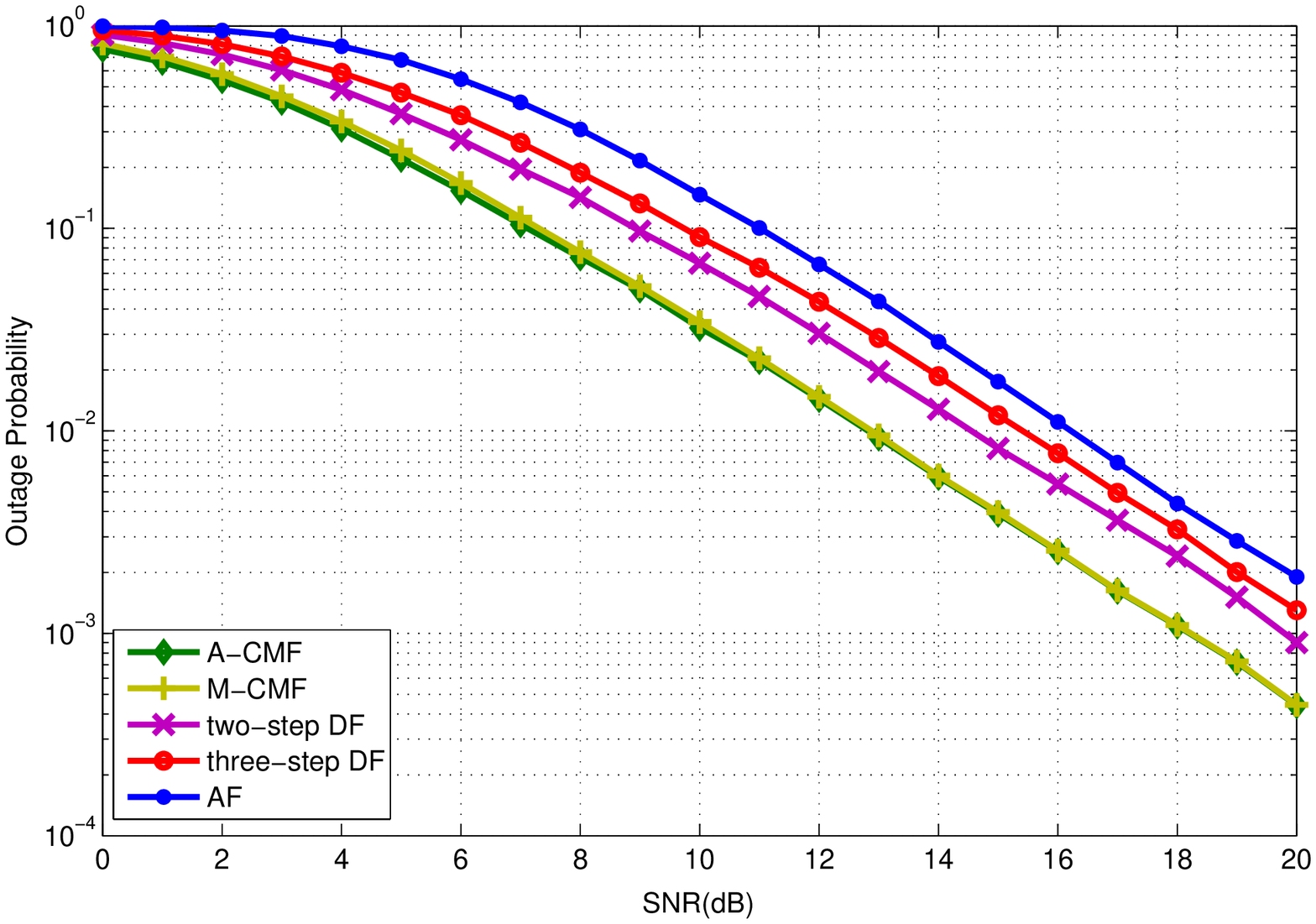}
\caption{Outage probability of the proposed schemes in comparison with conventional strategies versus SNR (M=2).}

\end{figure}

\newpage

\begin{figure}[h]
\centering
\center
\includegraphics[width =\columnwidth]{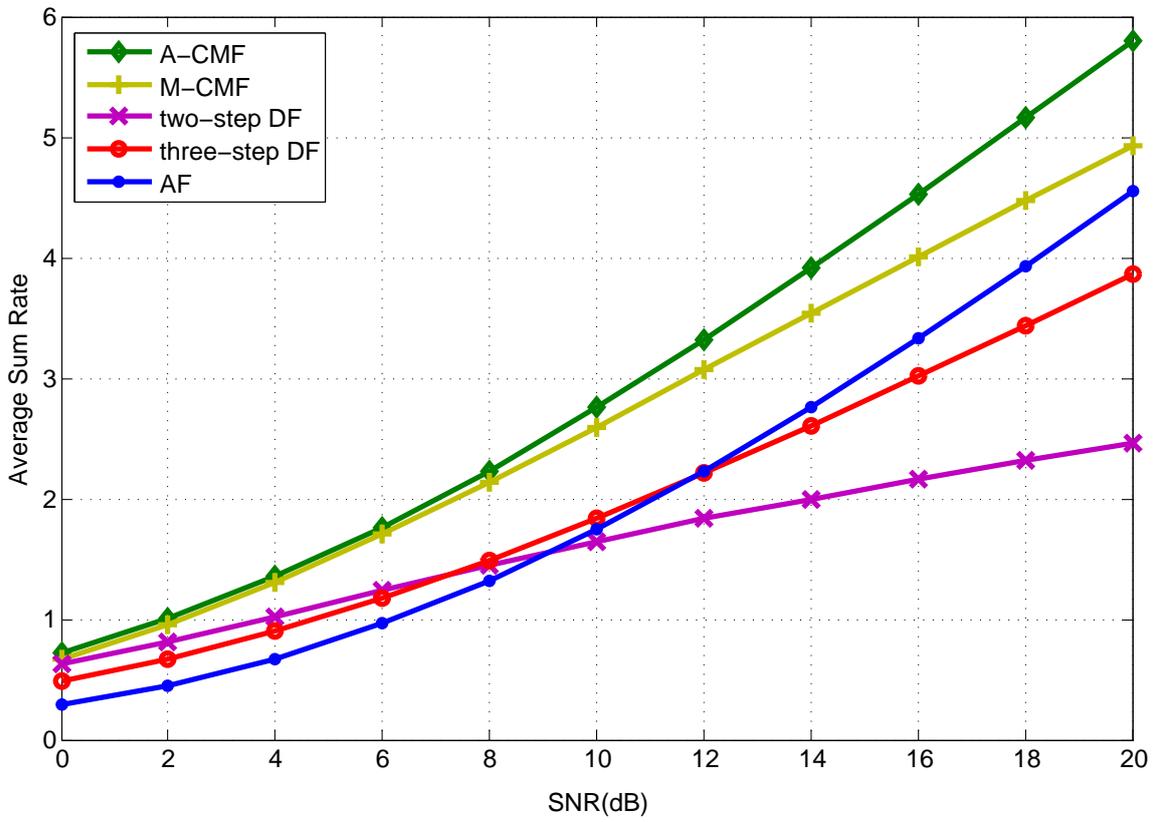}
\caption{Average sum rate of the proposed schemes in comparison with conventional strategies versus SNR (M=2).}

\end{figure}

\newpage

\begin{figure}[h]
\centering
\center
\includegraphics[width =\columnwidth]{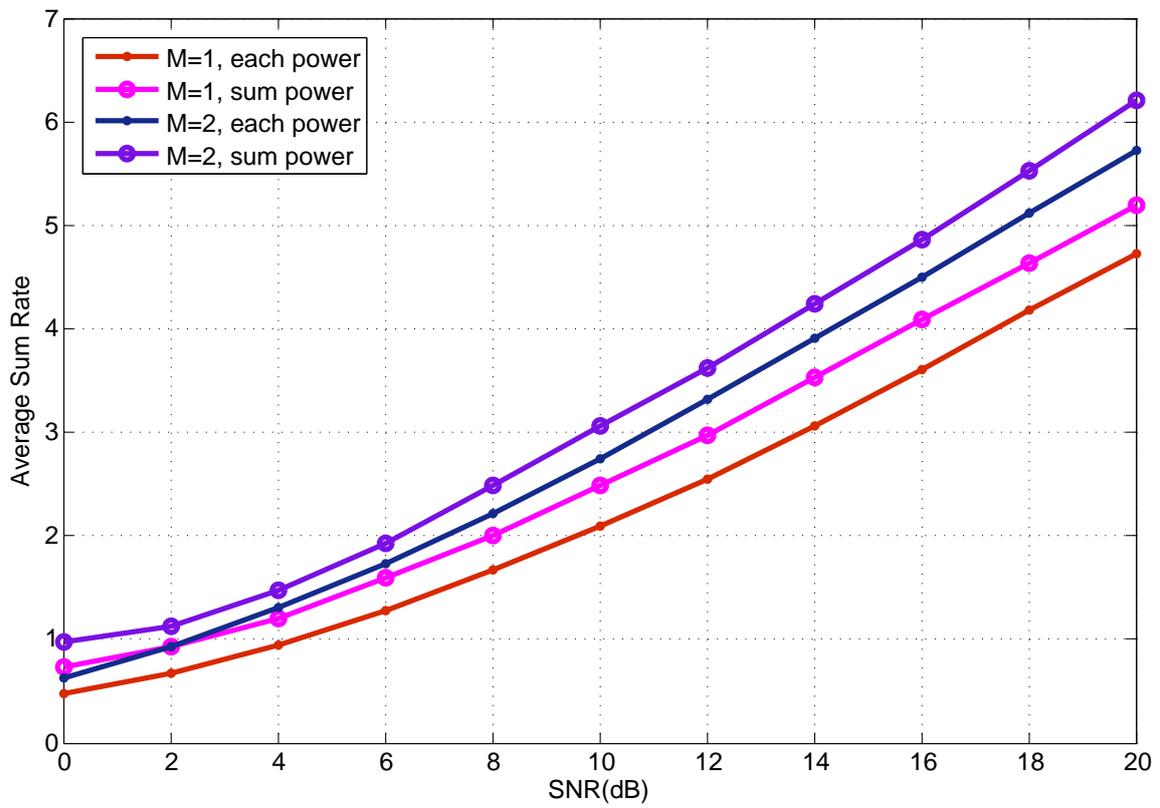}
\caption{Average sum rate of the A-CMF in the cases of constraint on the user's sum power or each power versus SNR (M=1,2).}
\end{figure}
\newpage

\begin{figure}[h]
\centering
\center
\includegraphics[width =\columnwidth]{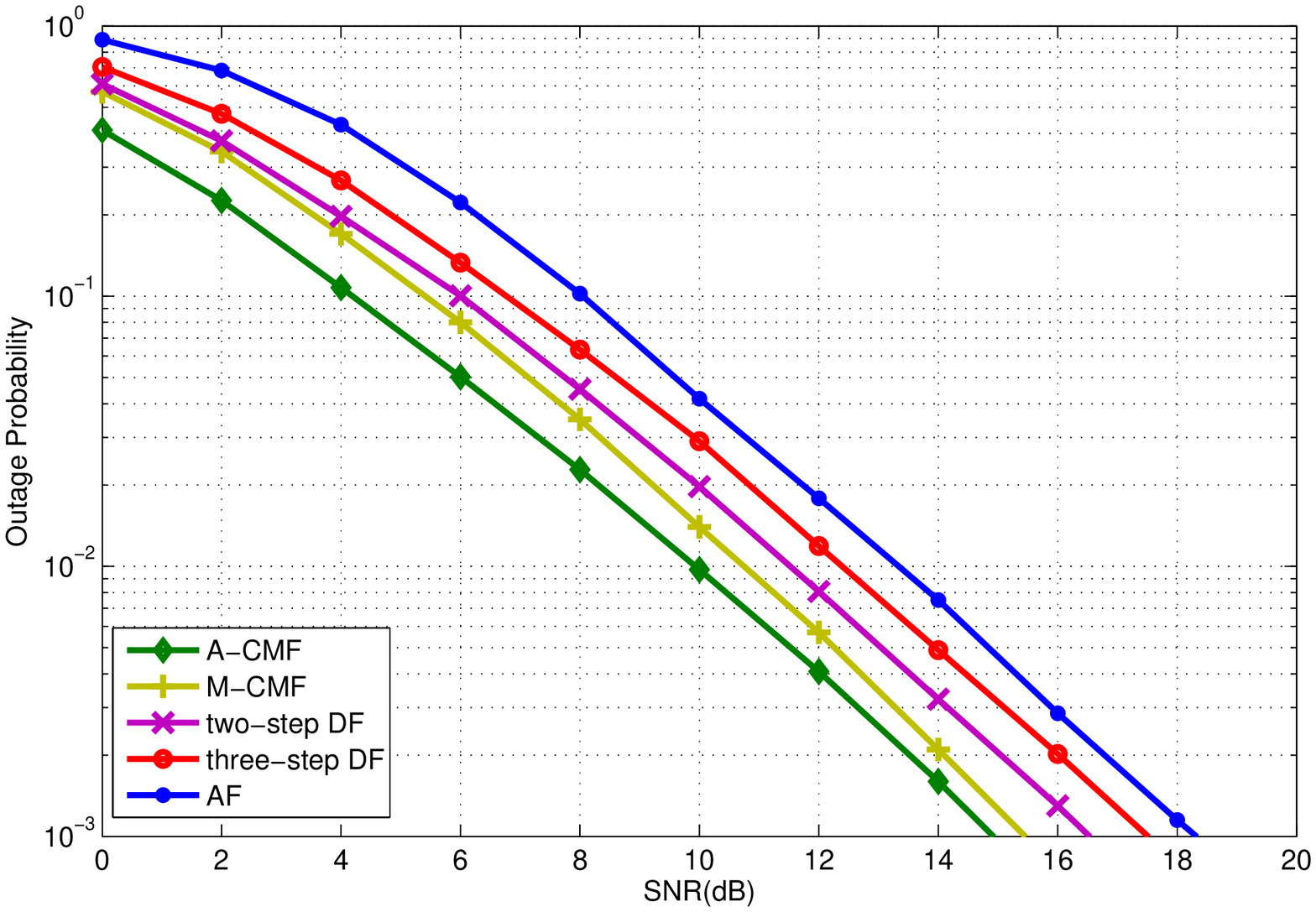}
\caption{Outage probability of the proposed schemes in comparison with the conventional strategies versus SNR for complex Gaussian Channels (M=2).}
\end{figure}
\newpage

\begin{figure}[h]
\centering
\center
\includegraphics[width =\columnwidth]{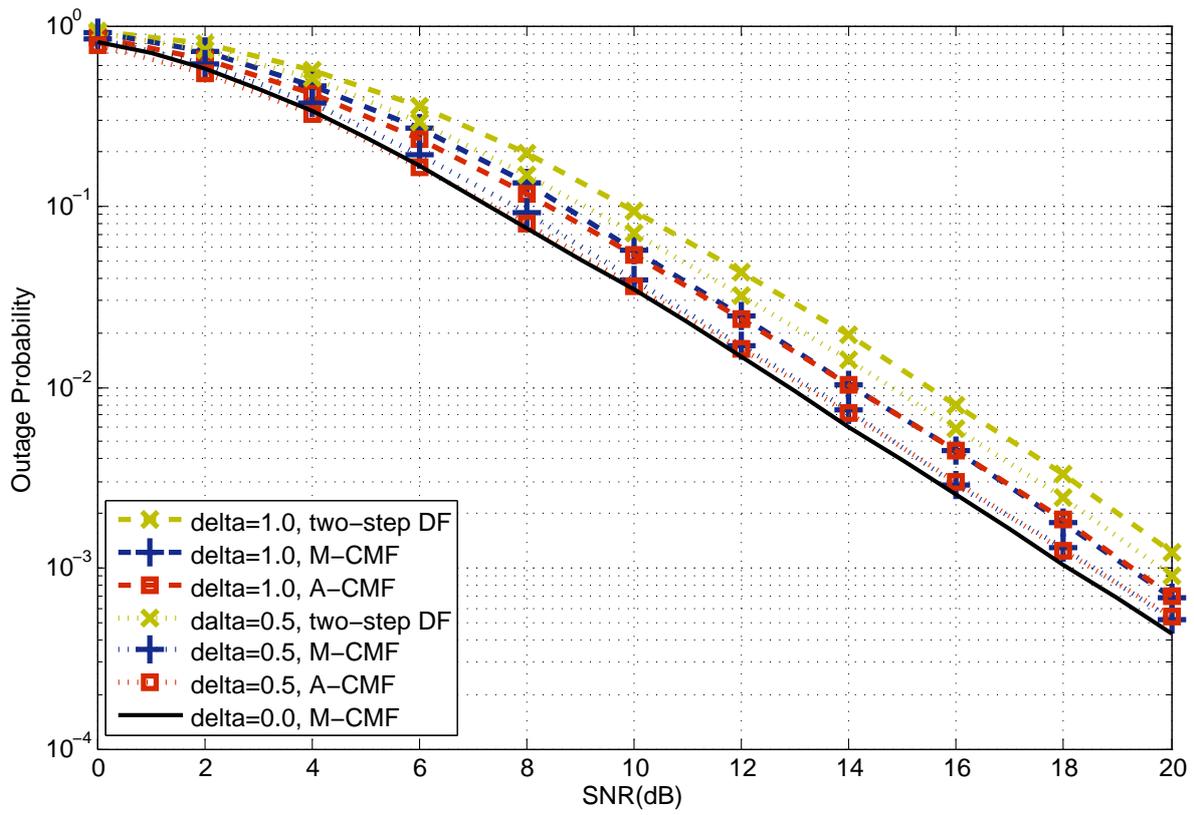}
\caption{Outage probability of the proposed schemes and two-step DF, for different values of delta, (M=2).}
\end{figure}
\newpage



\end{document}